\newcommand{\R}{\ensuremath{\mathds R}}
\newcommand{\nice}{philogeodetic}
\title{Drawing Shortest Paths in Geodetic Graphs\thanks{This research began at the Graph and Network Visualization Workshop 2019 (GNV'19) in Heiligkreuztal.
S.~C.~is funded by the German Research Foundation DFG – Project-ID 50974019 – TRR 161 (B06). M.~H.~is supported by the Swiss National Science Foundation within the collaborative DACH project \emph{Arrangements and Drawings} as SNSF Project 200021E-171681. S.~K.~is supported by NSF grants CCF-1740858, CCF-1712119, and
DMS-1839274.}}
\author{Sabine Cornelsen\inst{1}\orcidID{0000-0002-1688-394X}
  \and
  Maximilian~Pfister\inst{2}\orcidID{0000-0002-7203-0669}
  \and
  Henry~F\"orster\inst{2}\orcidID{0000-0002-1441-4189}
  \and
  Martin~Gronemann\inst{3}\orcidID{0000-0003-2565-090X}
  \and 
  Michael~Hoffmann\inst{4}\orcidID{0000-0001-5307-7106} 
  \and
  Stephen~Kobourov\inst{5}\orcidID{0000-0002-0477-2724}
  \and
  Thomas~Schneck\inst{2}\orcidID{0000-0003-4061-8844}
  }
\authorrunning{S.~Cornelsen et al.}
\institute{University of Konstanz, Germany 
\email{sabine.cornelsen@uni-konstanz.de}
\and
University of T\"ubingen, Germany
\email{\{pfister,foersth,schneck\}@informatik.uni-tuebingen.de}\and
University of Osnabr\"uck, Germany\\
\email{martin.gronemann@uni-osnabrueck.de}\and
Department of Computer Science, ETH Z{\"u}rich, Switzerland\\
 \email{hoffmann@inf.ethz.ch}
 \and
Department of Computer Science, University of Arizona, USA
\email{kobourov@cs.arizona.edu}
}
\begin{document}

\maketitle

\begin{abstract}
  Motivated by the fact that 
  in a space where shortest paths are unique, 
  no two shortest paths meet twice, 
  we study a question posed by Greg Bodwin: Given a geodetic graph
  $G$, i.e., an unweighted graph in which the shortest path between
  any pair of vertices is unique, is there a \emph{\nice{}} drawing
  of $G$, i.e., a drawing of $G$ in which the curves of any two
  shortest paths meet at most once?
  We answer this question in the negative by showing the existence of geodetic graphs that require some pair of shortest paths
  to cross at least four times. The bound on the number of crossings is tight for the class of graphs we construct. Furthermore, we exhibit geodetic graphs of diameter two that do not admit a \nice{} drawing.

  \keywords{Edge crossings \and Unique Shortest Paths \and Geodetic graphs.}
  \end{abstract}

\section{Introduction}
Greg Bodwin~\cite{bodwin:soda19} examined the structure of 
shortest paths in graphs with edge weights that guarantee that the shortest path between any pair of vertices is unique. Motivated by the fact that a set of unique shortest paths is \emph{consistent} in the sense that no two such paths can ``intersect, split apart, and then intersect again",  he conjectured that 
if the shortest path between any pair of vertices in a graph is unique then the graph can be drawn so that any two shortest paths meet at most once.
Formally, a \emph{meet} 
of two Jordan curves $\gamma_1,\gamma_2:[0,1]\to\R^2$ is a pair of maximal intervals $I_1, I_2\subseteq[0,1]$ for which there is a bijection $\iota: I_1 \rightarrow I_2$ so that 
$\gamma_1(x)=\gamma_2(\iota(x))$ for all $x \in I_1$. A \emph{crossing} is a meet with $(I_1\cup I_2)\cap\{0,1\}=\emptyset$. 
Two curves 
\emph{meet $k$ times}
if they have $k$ pairwise distinct meets. 
For example, shortest paths in a simple polygon (geodesic paths) have the property that they meet at most once~\cite{lp-esppr-84}.


\begin{figure}[t]
        \centering
        \includegraphics{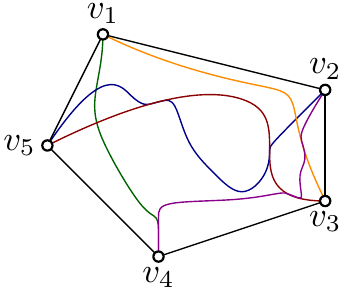}
        \caption{A drawing of the geodetic graph $K_5$. It has a crossing formed by edges $v_1v_3$ and $v_2v_5$. In addition, edges $v_1v_4$ and $v_2v_4$ meet but do not cross since their meet includes vertex $v_4$. Finally, edges $v_2v_5$ and $v_3v_5$ meet twice violating the property of \nice~drawings.}
        \label{fig:philo-example}
    \end{figure}

A \emph{drawing} of a graph $G$ in $\R^2$ maps the vertices to pairwise distinct points
and maps each edge to a Jordan arc between 
the two end-vertices that is disjoint from 
any other vertex. Drawings extend in a natural fashion to paths: Let $\varphi$ be a drawing of $G$, and let $P=v_1,\ldots,v_n$ be a path in $G$. Then let $\varphi(P)$ denote the Jordan arc that is obtained as the composition of the curves  $\varphi(v_1v_2),\ldots,\varphi(v_{n-1}v_n)$. A drawing $\varphi$ of a graph $G$ is \emph{\nice} 
if for every pair $P_1,P_2$ of shortest paths in $G$ the curves $\varphi(P_1)$ and $\varphi(P_2)$ meet at most once.  

 An unweighted graph is \emph{geodetic} if there is a unique shortest path between every pair of vertices.
 Trivial examples of geodetic graphs are trees 
and complete graphs. Observe that any two shortest paths in a geodetic graph are either disjoint or they intersect in a path.
Thus, a planar drawing of a planar geodetic graph is \nice. Also every straight-line drawing of a complete graph is \nice. Refer to \figurename~\ref{fig:philo-example} for an illustration of a drawing of a complete graph that is not \nice; this example also highlights some of the concepts discussed above. 
It is a natural question to ask whether every (geodetic) graph admits a philogeodetic drawing.

\paragraph{Results.}
We show that there exist geodetic graphs that require some pair of shortest paths to meet at least four times (Theorem~\ref{thm:1}). 
The idea is to start with a sufficiently large complete graph and subdivide every edge exactly twice.
The Crossing Lemma~\cite{pt-wcn-00} can be used to show that some pair of shortest paths must cross at least four times. By increasing the number of subdivisions per edge, we can reduce the density and obtain sparse counterexamples. The bound on the number of crossings is tight because 
any uniformly subdivided $K_n$ can be drawn so that every pair of shortest paths meets at most four times (Theorem~\ref{thm:unifour}).

On one hand, our construction yields counterexamples of diameter five.
On the other hand, the unique graph of diameter one is the complete graph, which is geodetic and admits a \nice{} drawing 
(e.g., any straight-line drawing since all unique shortest paths are single edges). Hence, it is natural to ask what is the largest 
$d$ so that every geodetic graph of diameter $d$ admits a \nice{} drawing. We show that $d=1$ by exhibiting an infinite family of geodetic graphs of diameter two that do not admit \nice{} drawings (Theorem~\ref{thm:d2}). The construction is based on incidence graphs of finite affine planes.  
The proof also relies on the crossing lemma.

\paragraph{Geodetic graphs.}
Geodetic graphs were introduced by Ore who asked for a characterization as Problem~3 in Chapter~6 of his book ``Theory of Graphs''~\cite[p.~104]{o-tg-62}. An asterisk flags this problem as a research question, which seems justified, as more than sixty years later a full characterization is still elusive. 

Stemple and Watkins~\cite{sw-pgg-68,w-cpgg-67} and
Plesn{\'i}k~\cite{p-tcgg-77} resolved the planar case by showing that
a connected planar graph is geodetic if and only if every
block is (1)~a single edge, (2)~an odd cycle, or
(3)~stems from a $K_4$ by iteratively choosing a vertex $v$ of the
$K_4$ and subdividing the edges incident to $v$ uniformly.
Geodetic graphs of diameter two were fully characterized by Scapellato~\cite{s-ggdtsrs-86}. They include the Moore graphs~\cite{hs-mgd-60} and graphs constructed from a generalization of affine planes.
Further constructions for geodetic graphs were given by Plesn\'ik~\cite{p-tcgg-77,plesnik84}, Parthasarathy and Srinvasan~\cite{parthasarathySrinivasan82}, and Frasser and Vostrov~\cite{frasserVostrov-arxiv16}.

Plesn\'ik~\cite{p-tcgg-77} and Stemple \cite{stemple79} proved that a
geodetic graph is homeomorphic to a complete graph if and only if it
is obtained from a complete graph $K_n$ by iteratively choosing a
vertex $v$ of the $K_n$ and subdividing the edges incident to $v$
uniformly.
A graph is geodetic if it is obtained from any geodetic graph by
uniformly subdividing each edge an even number of
times~\cite{parthasarathySrinivasan82,plesnik84}. 
However, the graph $G$ obtained by uniformly subdividing each edge of a complete  graph $K_n$ an odd number of times is not geodetic: Let $u,v,w$ be three vertices of $K_n$ and let $x$ be the middle subdivision vertex of the edge $uv$. Then there are two shortest $x$-$w$-paths in $G$, one containing $v$ and one containing $u$. 

\section{Subdivision of a Complete Graph}

The complete graph $K_n$ is geodetic and rather dense. However, all shortest paths are very short, as they comprise a single edge only.
So despite the large number of edge crossings in any drawing, every pair of shortest paths meets at most once, as witnessed, for instance, by any straight-line drawing of $K_n$. In order to lengthen the shortest paths it is natural to consider subdivisions of $K_n$. 

As a first attempt, one may want to ``take out'' some edge $uv$ by subdividing it many times. However, Stemple~\cite{stemple79} has shown that in a geodetic graph every path where all internal vertices have degree two must be a shortest path. Thus, it is impossible to take out an edge using subdivisions.
So we use a different approach instead, where all edges are subdivided uniformly.

\begin{theorem}\label{thm:1}
    There exists an infinite family of sparse geodetic graphs for which in any 
    drawing in $\R^2$ some pair of shortest paths meets at least four times.
\end{theorem}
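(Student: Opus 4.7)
The construction takes $G_k$, the $2k$-subdivision of $K_n$, with $k\geq 1$ and $n$ sufficiently large. By the cited theorem that even uniform subdivisions preserve the geodetic property, $G_k$ is geodetic. Since $G_k$ has $n+2k\binom{n}{2}$ vertices and $(2k+1)\binom{n}{2}$ edges, letting $k$ grow drives the edge density $|E|/|V|$ arbitrarily close to $1$, giving the promised infinite family of sparse geodetic graphs.

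Assume for contradiction that $G_k$ admits a drawing in which every pair of shortest paths meets at most three times. Consider the $\binom{n}{2}$ \emph{main} shortest paths, namely those between pairs of branch vertices: drawn as Jordan arcs together with the branch vertices, they constitute a topological drawing of $K_n$ on $n$ points with $\binom{n}{2}$ curves. Two main paths share at most one branch vertex, so the hypothesis caps the number of crossings per pair at three. Applying the Crossing Lemma of Pach and T\'oth~\cite{pt-wcn-00} to this topological $K_n$ (valid for $n\geq 10$, since $\binom{n}{2}\geq 4n$) yields $\Omega(n^4)$ crossings among the main shortest paths.

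On the other hand, summing over the $\binom{\binom{n}{2}}{2}$ pairs of main paths with at most three crossings each gives an upper bound of $O(n^4)$ crossings. The main obstacle is that both bounds are of the same asymptotic order, so the contradiction requires a careful constant comparison. This comparison can be sharpened by exploiting that each edge of $G_k$ lies on many shortest paths, so that each edge-level crossing in the drawing witnesses meetings between numerous pairs of shortest paths; amplifying the Crossing Lemma count by this multiplicity, while the fixed factor $3$ remains in the universal upper bound, eventually forces the lower bound to exceed the upper bound for $n$ (and, if necessary, $k$) large enough. The matching upper bound of Theorem~\ref{thm:unifour} confirms that four meetings per pair is tight, so the constant $4$ in the theorem statement is best possible.
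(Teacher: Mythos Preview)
Your construction (even uniform subdivision of $K_n$) and your first move (apply the Crossing Lemma to the induced topological drawing of $K_n$) match the paper. The gap is in the step you yourself flag: the Crossing Lemma gives roughly $\frac{1}{512}n^4$ crossings, while the assumption ``at most three meets per pair of main paths'' allows roughly $3\cdot 3\binom{n}{4}\approx\frac{3}{8}n^4$ crossings. The constants are off by two orders of magnitude, and your proposed amplification does not close this. If you replace main paths by all shortest paths, then yes, each edge-level crossing is witnessed by many path pairs---but the total number of shortest-path pairs (the denominator in the upper bound) grows by exactly the same multiplicity. The ratio is unchanged; you have just rescaled both sides. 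No choice of $n$ or $k$ helps, because both bounds are homogeneous of the same degree.

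The paper avoids global counting altogether. After the Crossing Lemma, it \emph{localizes}: each crossing along a subdivided edge $[uv]$ is charged to whichever of $u,v$ is nearer (both, for the central segment). Two rounds of averaging produce branch vertices $u,v$ such that $\Omega(n^2)$ crossings are charged to both. These crossings define a bipartite graph on $(E_u\setminus\{uv\})\times(E_v\setminus\{uv\})$ with $\Omega(n^2)$ edges on $O(n)+O(n)$ vertices; by K{\H o}v{\'a}ri--S{\'o}s--Tur{\'a}n it contains a $C_4$, i.e., two edges $ux,uy$ and two edges $va,vb$ that pairwise cross, each crossing landing in the half of its subdivided edge nearer $u$ (resp.\ $v$). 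Now take the subdivision vertices $x',y'$ just past the center of $[ux],[uy]$: the shortest $x'y'$-path runs through $u$ and covers both near-$u$ halves. Symmetrically the shortest $a'b'$-path runs through $v$. These two specific shortest paths cross at all four $C_4$ crossings. The missing ingredient in your proposal is precisely this extraction of a local $C_4$ structure; a pure counting contradiction cannot succeed here.
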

\begin{proof}
  Take an even number $t$ and a complete graph $K_s$ for some $s \in \mathds N$. Subdivide each edge 
  $t$ times.
  The resulting graph $K(s,t)$ is geodetic. See \figurename~\ref{fig:k82} for a drawing of $K(8,2)$. Note that $K(s,t)$ has $n=s+t\binom{s}{2}$ vertices and $m=(t+1)\binom{s}{2}$ edges, with $m\in O(n)$, for $s$ fixed and $t$ sufficiently large. Consider a
  drawing $\Gamma$ of $K(s,t)$. 

    Let $B$ denote the set of $s$ \emph{branch vertices} in $K(s,t)$, 
    which correspond to the vertices of the original $K_s$. For two distinct vertices $u,v\in B$, let $[uv]$ denote the shortest $uv$-path in $K(s,t)$, which corresponds to the subdivided edge $uv$ of the underlying $K_s$. As $t$ is even, the path $[uv]$ consists of $t+1$ (an odd number of) edges. For every such path $[uv]$, with $u,v\in B$, we charge the crossings in $\Gamma$ along the $t+1$ edges of $[uv]$ to one or both of $u$ and $v$ as detailed below; see \figurename~\ref{fig:charging} for illustration.
    \begin{itemize}
        \item Crossings along an edge that is closer to $u$ than to $v$ are charged to $u$; 
        \item crossings along an edge that is closer to $v$ than to $u$ are charged to $v$; \emph{and}
        \item crossings along the single central edge of $[uv]$ are charged to both $u$ and~$v$. 
    \end{itemize}
    
    \begin{figure}[htbp]
        \centering
        \includegraphics{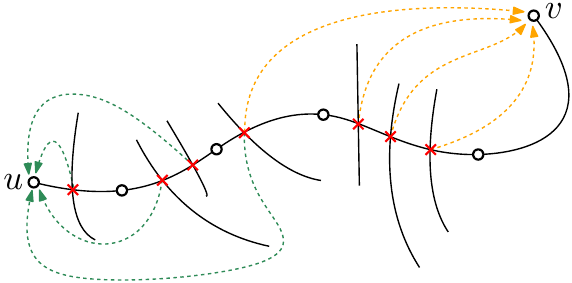}
        \caption{Every crossing is charged to at least one endpoint of each of the two involved (independent) edges. Vertices are shown as white disks, crossings as red crosses, and charges by dotted arrows. The figure shows an edge $uv$ that is subdivided four times, splitting it into a path with five segments. A crossing along any such segment is assigned to the closest of $u$ or $v$. For the central segment, both $u$ and $v$ are at the same distance, and any crossing there is assigned to both $u$ and $v$.}
        \label{fig:charging}
    \end{figure}

        Let $\Gamma_s$ be the drawing of $K_s$ 
    induced by $\Gamma$: every vertex of $K_s$ is placed at the position of the corresponding branch vertex of $K(s,t)$ in $\Gamma$ and every edge of $K_s$ is drawn as a Jordan arc along the corresponding path of $K(s,t)$ in $\Gamma$. Assuming $\binom{s}{2}\ge 4s$ (i.e., $s\ge 9$), by the Crossing Lemma~\cite{pt-wcn-00}, at least 
    \[
    \frac{1}{64}\frac{{\binom{s}{2}}^3}{s^2}=\frac{1}{512}s(s-1)^3\ge c\cdot s^4
    \]
    pairs of independent edges cross in $\Gamma_s$, for some constant $c$. Every crossing in $\Gamma_s$ corresponds to a crossing in $\Gamma$ and is charged to at least two (and up to four) vertices of $B$. Thus, the overall charge is at least $2cs^4$, and at least one vertex $u\in B$ gets at least the average charge of $2cs^3$. 
    
    Each charge unit 
    corresponds to a crossing of two independent edges in $\Gamma_s$, which is also charged to at least one other vertex of $B$. Hence, there is a vertex $v\ne u$ so that at least $2cs^2$ crossings are charged to both $u$ and $v$. Note that there are only $s-1$ edges incident to each of $u$ and $v$, and the common edge 
    $uv$ is not involved in any of the charged crossings (as adjacent rather than independent edge). Let $E_x$, for $x\in B$, denote the set of edges of $K_s$
    that are incident to $x$.
    
    We claim that there are two pairs of mutually crossing edges incident to $u$ and $v$, respectively; that is, there are sets $C_u\subset E_u\setminus\{uv\}$ and $C_v\subset E_v\setminus\{uv\}$ with $|C_u|=|C_v|=2$ so that $e_1$ crosses $e_2$, for all $e_1\in C_u$ and $e_2\in C_v$.
    
    Before proving this claim, we argue that establishing it completes the proof of the theorem. By our charging scheme, every crossing $e_1\cap e_2$ happens at an edge of the path $[e_1]$ in $\Gamma$ that is at least as close to $u$ as to the other endpoint of $e_1$. Denote the three vertices that span the edges of $C_u$ by $u,x,y$. Consider the two subdivision vertices $x'$ along $[ux]$ and $y'$ along $[uy]$ that form the endpoint of the middle edge closer to $x$ and $y$, respectively, than to $u$; see \figurename~\ref{fig:adjacent} for illustration. 
    
\begin{figure}[htbp]
    \centering
    \includegraphics{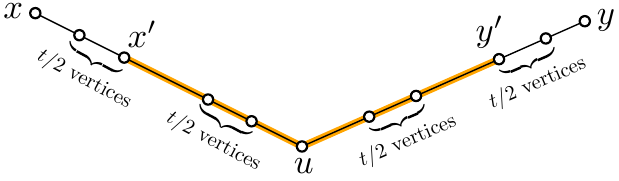}
    \caption{Two adjacent edges $ux$ and $uy$, both subdivided $t$ times, and the shortest path between the ``far'' endpoints $x'$ and $y'$ of the central segments of $[ux]$ and $[uy]$.} 
    \label{fig:adjacent}
\end{figure}  
    
    The triangle $uxy$ in $K_s$ corresponds to an odd cycle of length $3(t+1)$ in $K(s,t)$. So the shortest path between $x'$ and $y'$ in $K(s,t)$ has length $2(1+t/2)=t+2$ and passes through $u$, whereas the path from $x'$ via $x$ and $y$ to $y'$ has length $3(t+1)-(t+2)=2t+1$, which is strictly larger than $t+2$ for $t\ge 2$. It follows that the shortest path between $x'$ and $y'$ in $K(s,t)$ is crossed by both edges in $C_v$. A symmetric argument yields two subdivision vertices $a'$ and $b'$ along the two edges in $C_v$ so that the shortest $a'b'$-path in $K(s,t)$ is crossed by both edges in $C_u$. By definition of our charging scheme (that charges only ``nearby'' crossings to a vertex), the shortest paths $x'y'$ and $a'b'$ in $K(s,t)$ have at least four crossings.
  
    It remains to prove the claim. To this end, consider the bipartite graph $X$ on the vertex set $E_u\cup E_v$ where two vertices are connected if the corresponding edges are independent and cross in $\Gamma_s$. Observe that two sets $C_u$ and $C_v$ of mutually crossing pairs of edges (as in the claim) correspond to a $4$-cycle $C_4$ in $X$. So suppose for the sake of a contradiction that $X$ does not contain $C_4$ as a subgraph. Then by the K{\H o}v{\'a}ri-S{\'o}s-Tur{\'a}n Theorem~\cite{kst-pkz-54} the graph $X$ has $O(s^{3/2})$ edges. But we already know that $X$ has at least $2cs^2=\Omega(s^2)$ edges, which yields a contradiction. Hence, $X$ is not $C_4$-free and the claim holds.\qed
\end{proof}

The bound on the number of crossings in Theorem~\ref{thm:1} is tight.

\begin{theorem}\label{thm:unifour}
 A graph obtained from a complete graph by subdividing the edges uniformly an even number of times
  can be drawn so that every pair of shortest paths crosses at most four times.
\end{theorem}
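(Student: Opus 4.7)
I would exhibit the \emph{straight-line convex drawing} of $K(s,t)$: place the $s$ branch vertices at the corners of a regular $s$-gon, draw each subdivided edge as the corresponding straight chord of the polygon, and position the $t$ subdivision vertices at equal intervals along the chord. The goal is to verify that in this drawing, every pair of shortest paths crosses at most four times.

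The first step is to understand the structure of shortest paths in $K(s,t)$. Since $K_s$ is complete, the graph-theoretic distance between any two branch vertices is $1$. Consequently, any shortest path in $K(s,t)$ traverses at most three chords of the underlying $K_s$: optionally a partial chord at the source, optionally a full middle chord joining the branch vertices closest to the source and target, and optionally a partial chord at the target. Because $t$ is even, each subdivision vertex has a unique nearest branch vertex, and each partial chord is contained in the \emph{near half} of its underlying full chord. In a convex straight-line drawing, two chords cross at most once, and only if their endpoints alternate around the circle; chords sharing a branch vertex do not cross. Thus the crossings between two shortest paths are bounded by the number of independent chord pairs with alternating endpoints whose crossing point lies in the relevant halves of any partial chords involved.

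The cases in which at least one of the two paths uses at most two chords quickly yield at most four crossings. The main obstacle is the case of two subdivision-to-subdivision paths, each using three chords (partial--full--partial), which has a naive potential of nine crossings. To reduce this to four, I plan to exploit (i)~the zigzag structure $v_i u_i w_i z_i$ of the chords of each path, in which consecutive chords share a branch vertex and cannot both contribute crossings against an independent chord of the other path; (ii)~the restriction that partial chords cover only the near half, so a crossing with an independent chord counts only if its geometric position is in that half; and (iii)~the convex alternation criterion, which combinatorially rules out many pairings. I would finish by enumerating, up to symmetry, the cyclic orderings of the (at most eight) branch vertices $u_1,v_1,w_1,z_1,u_2,v_2,w_2,z_2$ around the circle and verifying the bound of four crossings in each configuration. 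I anticipate this case enumeration, combined with the careful bookkeeping of which crossings actually fall into the designated partial halves, to be the technically most demanding component of the proof.
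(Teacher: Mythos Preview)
Your proposed drawing---equally spaced subdivision vertices along straight chords of a convex polygon---does \emph{not} achieve the bound of four, so the case analysis you outline cannot succeed. Here is a concrete failure in $K(12,14)$ with the branch vertices at a regular $12$-gon (labelled $0,\dots,11$). Let $P_1$ be the shortest path from the $7$th subdivision vertex on chord $[0,5]$ (counted from~$0$), through $0$ and $6$, to the $7$th subdivision vertex on $[6,11]$ (counted from~$6$); let $P_2$ run analogously through $2$ and $8$, with tails on $[2,9]$ and $[8,3]$. The full middle chords $[0,6]$ and $[2,8]$ each cross all three pieces of the other path (five crossings), and in addition the two ``near-$0$/near-$2$'' tails cross each other, as do the ``near-$6$/near-$8$'' tails, giving seven crossings in total. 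The relevant intersection points sit at fractions $(\sqrt3-1)/2\approx0.366$ and $2\sqrt3-3\approx0.464$ from the corresponding branch vertices, both below $7/15\approx0.467$, so they lie on the partial chords. This also refutes your structural claim~(i): two consecutive chords of one path, sharing a branch vertex, \emph{can} both be crossed by a single chord of the other path.

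The paper uses a different drawing. The branch vertices are still in convex position and each subdivided edge follows the straight chord, but the $t$ subdivision vertices are \emph{clustered}: half of them are placed very close to one endpoint and half very close to the other. Consequently every crossing between two chords lies on the single \emph{central} segment of each subdivided edge. A short case check then shows that every shortest path contains at most two central segments, whence any two shortest paths cross at most $2\cdot 2=4$ times. The clustering is the entire mechanism; with equal spacing the ``near-half'' restriction you invoke is too weak to force the bound.
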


\begin{proof}[Sketch]
\label{PAGE:construction}Place the vertices in convex position. Draw the subdivided edges along straight-line segments. For each edge, put half of the subdivision vertices very close to one endpoint and the other half very close to the other endpoint (\figurename~\ref{fig:k82}). As a result, all crossings  fall into the central segment of the path. 
\qed
\end{proof}

\begin{figure}[htbp]
    \centering
    \includegraphics[scale=.78]{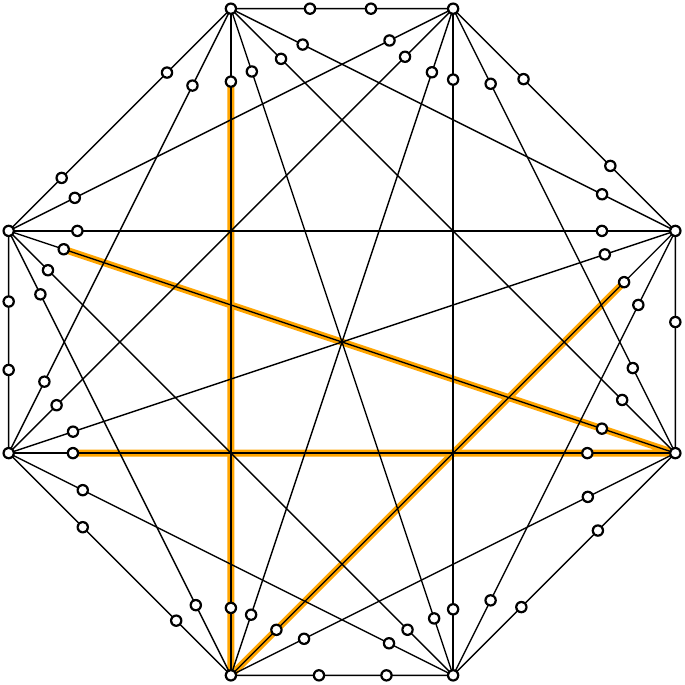}
    \caption{A drawing of $K(8,2)$, the complete graph $K_8$ where every edge is subdivided twice, so that every pair of shortest paths meets at most four times. Two shortest paths that meet four times are shown bold and orange.}
    \label{fig:k82}
\end{figure}

%

\section{Graphs of Diameter Two}
\label{sec:diam2}
In this section we give examples of geodetic graphs of diameter two
that cannot be drawn in the plane such that any two shortest paths
meet at most once.

An \emph{affine plane} of order $k \geq 2$ consists of a set of lines
and a set of points with a containment relationship such that (i) each
line contains $k$ points, (ii) for any two points there is a unique
line containing both, (iii) there are three points that are not
contained in the same line, and (iv) for any line $\ell$ and any point
$p$ not on $\ell$ there is a line $\ell'$ that contains $p$, but no
point from $\ell$. Two lines that do not contain a common point are
\emph{parallel}. Observe that each point is contained in $k+1$
lines. Moreover, there are $k^2$ points and $k+1$ classes of parallel
lines each containing $k$ lines. The 2-dimensional vector space
$\mathds F^2$ over a finite field $\mathds F$ of order $k$ with the
lines $\{(x,mx+b);\;x\in\mathds F\}$, $m,b \in \mathds F$ and
$\{(x_0,y);\; y \in \mathds F\}$, $x_0 \in \mathds F$ is a finite
affine plane of order $k$. Thus, there exists a finite affine plane of
order $k$ for any $k$ that is a prime power (see, e.g., \cite{hp-73-pp}). 

\label{def:gk}Scapellato~\cite{s-ggdtsrs-86} showed how 
to construct geodetic graphs of diameter two as follows: Take a
finite affine plane of order $k$.  Let $L$ be the set of lines and let
$P$ be the set of points of the affine plane. Consider now the graph
$G_k$ with vertex set $L \cup P$ and the following two types of edges:
There is an edge between two lines if and only if they are
parallel. There is an edge between a point and a line if and only if
the point lies on the line; see Fig.~\ref{fig:gk}. There are no edges between points. 
It is easy to check that $G_k$ is a geodetic graph of diameter two.

\begin{figure}[htbp]
    \centering
    \includegraphics{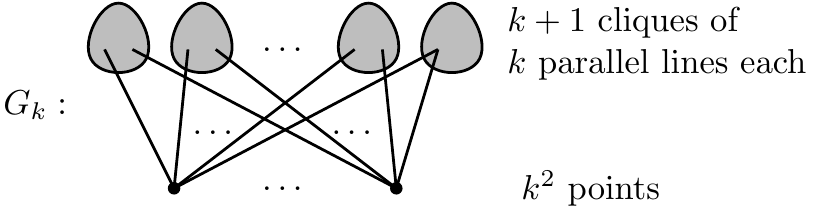}
    \caption{Structure of the graph $G_k$.}
    \label{fig:gk}
\end{figure}

\begin{theorem}\label{thm:d2}
  There are geodetic graphs of diameter two that cannot be drawn
  in the plane such that any two shortest paths meet at most once.
\end{theorem}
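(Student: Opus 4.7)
The plan is to assume for contradiction that $G_k$ admits a philogeodetic drawing $\varphi$ for a sufficiently large prime power $k$, and then to force the bipartite point-line incidence subgraph $H \subseteq G_k$ (with vertex set $P \cup L$ and only the incidence edges) to be planar; this contradicts a standard density bound for $H$.

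First I would record the consequences of philogeodeticity on individual edges: since every edge is a length-one shortest path, any two edges meet at most once, which means two edges sharing a vertex cannot cross in the interior, and two independent edges cross at most once. The core of the proof is then the claim that in $\varphi$ no two edges of $H$ cross at all. Assuming for contradiction that two independent incidence edges $e = p_1\ell_1$ and $f = p_2\ell_2$ cross at some point $x$, I would exhibit, in each of three sub-cases, a pair of shortest paths that share a vertex \emph{and} additionally meet at $x$, producing two meetings.

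The sub-cases are as follows. If $\ell_1 \parallel \ell_2$, let $P_1$ be the unique shortest path $p_1,\ell_1,\ell_2$ (necessarily using the parallel-class edge $\ell_1\ell_2$, by the parallel axiom) and let $P_2 = f$; they share the vertex $\ell_2$. If $\ell_1$ and $\ell_2$ meet in a point $q$ with $q = p_1$ (the case $q = p_2$ is symmetric), let $P_1 = \ell_1,p_1,\ell_2$, the unique shortest path between two non-parallel lines, and $P_2 = f$; they again share $\ell_2$. If $\ell_1$ and $\ell_2$ meet in a point $q \notin \{p_1, p_2\}$, let $P_1$ and $P_2$ be the unique point-to-point shortest paths $p_1,\ell_1,q$ and $p_2,\ell_2,q$; they share the vertex $q$. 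Because $e \in P_1$ and $f \in P_2$ in every case, the drawings $\varphi(P_1)$ and $\varphi(P_2)$ also meet at the crossing $x$, so they meet at two distinct points, contradicting philogeodeticity.

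The claim forces the restriction of $\varphi$ to $H$ to be crossing-free, so $H$ is planar. But $H$ is bipartite and $C_4$-free (since two points determine a unique line), so a planar embedding would require $|E(H)| \leq \tfrac{3}{2}(|V(H)| - 2)$; with $|V(H)| = 2k^2 + k$ and $|E(H)| = k^2(k+1)$ this fails for every $k \geq 3$, and the Crossing Lemma applied to the dense $C_4$-free graph $H$ gives an alternative route to the same conclusion. Since affine planes of order $k$ exist for every prime power $k$, this yields an infinite family of diameter-two geodetic graphs that admit no philogeodetic drawing. The main obstacle lies in the case analysis of the claim, in particular verifying that the exhibited paths are indeed the \emph{unique} shortest paths of $G_k$ between the stated endpoints; once the planarity of $H$ is established, the contradiction with its density follows immediately.
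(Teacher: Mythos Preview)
Your argument is correct and takes a genuinely different route from the paper. The paper does not attempt to show that the incidence subgraph $H$ is drawn without crossings. Instead, it observes only the three consequences of philogeodeticity that (a)~two edges meet at most once, (b)~an edge meets a length-two path at most once, and (c)~two length-two paths meet at most once. From (a)--(c) it deduces that the edges crossing the star at any fixed vertex $v$ must form a matching, giving at most $(n-1)/2$ such crossings; combining this upper bound with the Crossing Lemma lower bound yields a contradiction only for $k\ge 129$.

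Your case analysis is sharper: by exploiting the affine-plane structure (parallel versus intersecting lines, and the unique line through two points) you locate, for \emph{every} hypothetical crossing of two independent incidence edges, a concrete pair of shortest paths meeting twice. This forces $H$ to be planar outright, and the contradiction with $|E(H)|=k^2(k+1)$ versus the $\tfrac{3}{2}(|V(H)|-2)$ bound for planar bipartite $C_4$-free graphs already kicks in at $k\ge 3$. So your approach is more elementary (the Crossing Lemma is unnecessary) and yields a dramatically smaller threshold; the paper's approach, in exchange, uses only the generic conditions (a)--(c) and would transfer verbatim to any diameter-two geodetic graph whose point--line part is dense enough. One small remark: your Case~1 invokes the parallel-class edge $\ell_1\ell_2$, which lies in $G_k$ but not in $H$; this is fine, but you could also stay entirely inside $H$ there by using the unique line $\ell_3$ through $p_1$ and $p_2$ and the two length-two paths $\ell_1,p_1,\ell_3$ and $\ell_2,p_2,\ell_3$.
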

\begin{proof}
  Let $k \geq 129$ be such that there exists an affine plane of order
  $k$ (e.g., the prime $k=131$). Assume there was a drawing of $G_k$ in which
  any two shortest paths meet at most once. Let $G$ be the bipartite
  subgraph of $G_k$ without edges between lines. Observe that any path
  of length two in $G$ is a shortest path in $G_k$.  As $G$ has
  $n=2k^2 + k$ vertices and $m=k^2(k+1) > kn/2$ edges, we have
  $m>4n$, for $k\geq 8$.  Therefore, by the Crossing Lemma~\cite[Remark~2 on
  p.~238]{pt-wcn-00} there are at least $m^3/64n^2 > k^3 n/512$
  crossings between independent edges in~$G$.

  Hence, there is a vertex $v$ such that the edges incident to
  $v$ are crossed more than $k^3/128$ times by edges not incident to
  $v$. By assumption, (a)~any two edges meet at most once, (b)~any edge meets any pair of adjacent edges at most once, and (c)~any pair of adjacent edges meets any  pair of adjacent edges at most once.
  Thus, the crossings
  with the edges incident to $v$ stem from a matching. It follows
  that there are at most $(n-1)/2 = (2k^2 + k -1)/2$ such
  crossings. However, $(2k^2 + k -1)/2 < k^3/128$, for $k \geq 129$.\qed
\end{proof}

\section{Open Problems}

We conclude with two open problems: (1)~Are there diameter-2 geodetic graphs with edge density $1+\varepsilon$ 
that do not admit a \nice{} drawing? (2)~What is the complexity of deciding if a geodetic graph admits a \nice{} drawing?

\bibliographystyle{splncs04}
\bibliography{references}

\clearpage

\begin{appendix}

\section{Proof of Theorem~\ref{thm:unifour}}
\begin{proof}
    Draw the graph as described on Page~\pageref{PAGE:construction} and as illustrated in Fig.~\ref{fig:k82} for $K(8,2)$. There are two different types of vertices, and six different types of shortest paths. Let $B$ denote the set of branch vertices, and let $S$ denote the set of subdivision vertices. Note that for every edge $uv$ of $K_n$, only the central segment of the subdivided path $[uv]$ may have crossings in the drawing. We claim that every shortest path in the graph contains at most two central segments in the drawing, from which the theorem follows immediately. Consider a pair $u,v$ of vertices.
    
    \smallskip\emph{Case~1:} $\{u,v\}\cap B\ne\emptyset$. Suppose without loss of generality that $u\in B$. If $v\in B$ or $v\in S$ subdivides an edge incident to $u$, then the shortest $uv$-path contains at most one central segment. Otherwise, $v\in S$ subdivides an edge $xy$ disjoint from $u$. One of $x$ or $y$, without loss of generality $x$ is closer to $v$. Then the shortest $uv$-path is $[vx][xu]$, which contains exactly one central segment, $[xu]$.
    
    \smallskip\emph{Case~2:} $u,v\in S$. If $u$ and $v$ subdivide the same edge, then the shortest $uv$-path contains at most one central segment. If $u$ and $v$ subdivide distinct adjacent segments, $xy$ and $xz$,  
    then the shortest $uv$-path is either $[ux][xv]$, which contains at most two central segments. Or the sum of the length of $[uy]$ and $[zv]$ is at most half of the number of subdivision vertices per edge and the shortest $uv$-path is $[uy][yz][zv]$, which then contains at most one central segment.  Otherwise, $u$ and $v$ subdivide disjoint segments, $xy$ and $wz$, 
    where without loss of generality $x$ is closer to $u$ than $y$ and $w$ is closer to $v$ than $z$. Then the shortest $uv$-path is $[ux][xw][wv]$, which contains exactly one central segment, $[xw]$.\qed
  \end{proof}

  \section{Proof that $G_k$ (as Defined in Section~\ref{sec:diam2}) is Geodetic}
  \label{app:gk}

The following statement follows from Scapellato's classification~\cite{s-ggdtsrs-86}. As we need much less than this classification in its full generality, we provide an easy proof of what we use, for the sake of self-containment.

\begin{lemma}\label{lem:1}
  $G_k$ is a geodetic graph of diameter two.
\end{lemma}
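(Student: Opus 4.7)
The plan is to show (1) that any two non-adjacent vertices of $G_k$ are at distance exactly $2$, and (2) that the connecting length-$2$ path is unique. Together with the trivial fact that an edge is the unique shortest path between its endpoints, this yields diameter two and the geodetic property. Before the case analysis I would extract one preliminary fact from the observations in the paper: for every point $p$ and every line $\ell$, there is a unique line through $p$ that is equal to or parallel to $\ell$. Two distinct lines in a common parallel class share no point, so $p$ meets at most one line per class; since there are $k+1$ parallel classes and $p$ lies on $k+1$ lines, $p$ lies on exactly one line per class.

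I would then split the non-adjacent pairs by endpoint type. For two distinct points $p, q$, the middle vertex $x$ of any length-$2$ path must be a line, since points have only lines as neighbours, and this line contains both $p$ and $q$; axiom~(ii) pins it down uniquely. For two non-adjacent, hence non-parallel, lines $\ell_1, \ell_2$, axiom~(ii) forbids two common points, so they share exactly one point $p$; an intermediate vertex on a length-$2$ path is either a common point (only $p$) or a line parallel to both, but a common parallel would simultaneously belong to the two distinct parallel classes of $\ell_1$ and $\ell_2$, which is impossible. For a point $p$ and a line $\ell$ with $p \notin \ell$, any intermediate vertex must be a line through $p$ parallel to $\ell$ (a point intermediate would need to be adjacent to $p$, impossible), and the preliminary fact supplies exactly one such line.

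The main obstacle is the preliminary uniqueness-of-parallels statement: it is not formulated as an axiom but must be squeezed out of the combinatorial observations the paper records. Once it is in place, every case reduces to a short incidence argument built on axiom~(ii) and the fact that parallelism partitions the line set.
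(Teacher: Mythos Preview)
Your proof is correct and follows essentially the same three-case analysis as the paper's own argument. If anything, you are more careful: you explicitly derive the uniqueness of the parallel through a point from the counting observations, and in the line--line case you explicitly rule out a line as the intermediate vertex---two points the paper's terse proof glosses over.
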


\begin{proof}
  Two lines have distance one if they are parallel. Otherwise they
  share exactly one vertex and, hence, are connected by exactly one
  path of length two. For any two points there is exactly one line
  that contains both. Given a line $\ell$ and a point $p$ then either
  $p$ lies on $\ell$ and, thus, $p$ and $\ell$ have distance one. Or
  there is exactly one line $\ell'$ containing $p$ that is parallel to
  $\ell$ and, thus, there is exactly one path of length two between
  $\ell$ and $p$.\qed
\end{proof}

\end{appendix}

\end{document}